\begin{document}
\title{A Stock Market Model Based on CAPM and Market Size}
\author{Andrey Sarantsev, Blessing Ofori-Atta, Brandon Flores}
\address{Department of Mathematics and Statistics, University of Nevada, Reno} 
\email{asarantsev@unr.edu} 
\keywords{JEL Classification: C58, G17. Capital Asset Pricing Model; Stochastic differential equations; Capital distribution curve;  Stochastic stability; Market weight}

\maketitle

\newtheorem{theorem}{Theorem}
\theoremstyle{definition}
\newtheorem{definition}{Definition}
\newtheorem{remark}{Remark}
\newtheorem{example}{Example}

\begin{abstract}
We introduce a new system of stochastic differential equations which models dependence of market beta and unsystematic risk upon size, measured by market capitalization. We fit our model using size deciles data from Kenneth French's data library. This model is somewhat similar to generalized volatility-stabilized models in (Pal, 2011; Pickova, 2013). The novelty of our work is twofold. First, we take into account the difference between price and total returns (in other words, between market size and wealth processes). Second, we work with actual market data. We study the long-term properties of this system of equations, and reproduce observed linearity of the capital distribution curve.  In the Appendix, we analyze size-based real-world index funds. 
\end{abstract}

\thispagestyle{empty}

\section{Introduction}

\subsection{Size effect and the Capital Asset Pricing Model} The size of a stock is measured by its {\it market capitalization}, or {\it market cap:} current stock price multiplied by the number of shares. For a stock portfolio, its market cap is defined as weighted sum of market caps of constituent stocks, with weights equal to the portfolio weights. The size is a very important fundamental characteristic of a stock or a portfolio. 

It is observed that small stocks have higher returns but higher risk than large stocks. An explanation is that small companies are in their dynamic growth phase and they have higher growth potential relative to large mature companies, but small companies are more vulnerable to failure and bankruptcy. Some researchers claim that even after adjusting for risk, small stocks have higher returns than large stocks. This adjustment can be made rigorous within the framework of the Capital Asset Pricing Model (CAPM). Take a portfolio of stocks with total returns (including dividends, not inflation-adjusted) $Q(t)$ during time $t$. Here, we operate in a discrete-time setting. Compare it with risk-free returns from short-term Treasury bills $R(t)$. An investor deserves premium reward for taking risk and investing in stocks rather than in safe Treasury bills. We calculate this {\it equity premium} $P(t)$ by subtracting $P(t) = Q(t) - R(t)$. Next, we compute this equity premium $P_0(t)$ for a market portfolio, used as a benchmark. An example of such benchmark is the Standard \& Poor (S\&P) 500, a widely used benchmark for large U.S. stocks. Run a linear regression:
\begin{equation}
\label{eq:CAPM}
P(t) = \alpha + \beta P_0(t) + \varepsilon(t).
\end{equation}
The parameter $\beta$ shows {\it market exposure}, how much risk the portfolio is exposed to because of fluctuations in the benchmark. The parameter $\alpha$ shows {\it excess return}, how much can one earn from this portfolio on top of this market return. They are often called by their Greek names: {\it beta} and {\it alpha}. The residual $\varepsilon(t)$ is called {\it unsystematic risk} which can be eliminated by diversification. According to the CAPM, $\alpha = 0$ and the only risk which deserves rewards is the systematic risk (due to market exposure) since other risk can be diversified away. The CAPM was proposed in the classic article \cite{Sharpe}. Subsequent research cast doubt on the consistency of CAPM with actual market data. In particular, \cite{Banz} found that taking a portfolio of small stocks generates positive $\alpha$. That is, small stocks have higher returns than large stocks even after adjusting for market exposure (which is greater than 1 for small stocks). Subsequent classic article \cite{3factor} confirmed this. Further research on the size effect can be found in \cite{Semenov} and \cite{SizeSurvey} and references therein. See also critique of CAPM in \cite{CAPM}. 

\subsection{Our model} We study dependence of $\alpha, \beta, \sigma$ (the standard deviation of $\varepsilon(t)$) on the size, measured by the market cap $S(t)$, or, more precisely, by relative size to $S_0(t)$:
\begin{equation}
\label{eq:rel-size}
C(t) = \ln\frac{S_0(t)}{S(t)}.
\end{equation}
We would like to find functions $\alpha, \beta, \sigma$ of $C$ such that for standardized white noise terms $Z(t)$, with $\mathbb E[Z(t)] = 0$ and $\mathbb E[Z^2(t)] = 1$:
\begin{equation}
\label{eq:main}
P(t) = \alpha(C(t)) + \beta(C(t))P_0(t) + \sigma(C(t))Z(t).
\end{equation}
This allows us to quantify how exactly $\alpha$ and $\beta$ (and $\sigma$ the standard deviation of unsystematic risk) depend on the relative size measure. We then consider a version of the equation~\eqref{eq:main} in which equity premia are replaced by {\it price returns}, that is, returns due to price changes (or, equivalently, market cap changes). That is, we replace $P(t)$ and $P_0(t)$ with 
$\ln(S(t+1)/S(t))$ and $\ln(S_0(t+1)/S_0(t))$, ignoring for now risk-free rate and dividends. This gives us:
\begin{equation}
\label{eq:price}
\ln\frac{S(t+1)}{S(t)} = \alpha(C(t)) + \beta(C(t))\ln\frac{S_0(t+1)}{S_0(t)} + \sigma(C(t))Z(t)
\end{equation}
with $C(t)$ from~\eqref{eq:rel-size}. This equation~\eqref{eq:price} includes only market caps of benchmark $S_0(t)$ and the portfolio $S(t)$. This time series equation or its continuous-time version, a stochastic differential equation, allows us to model $S(t)$ and $S_0(t)$ separately from dividends and risk-free returns. On top of this, we add the equation~\eqref{eq:main}. Surprisingly, market data gives us almost the same functions $\alpha, \beta, \sigma$ in~\eqref{eq:main} and~\eqref{eq:price}, but with differing coefficients. Moreover, the white noise terms in~\eqref{eq:main} and~\eqref{eq:price} are almost perfectly (more than 99\%) correlated. This does not follow from any theoretical considerations, and seems simply a piece of good luck which simplifies analysis. 

We also adapt~\eqref{eq:main} and~\eqref{eq:price} for continuous time: Equity premium $P(t)$ becomes then $\mathrm{d}\ln V(t)$, where $V(t)$ is the wealth process adjusted for risk-free returns. More precisely, $V(t) = U(t)/U^*(t)$, where $U(t)$ is the wealth accumulated from investing $U(0) = 1$ in stock portfolio and reinvesting dividends, while $U_*$ is a similar wealth process from investing in Treasury bills. Then~\eqref{eq:main} takes the form
\begin{equation}
\label{eq:premia-CT}
\mathrm{d}\ln V(t) = \alpha(C(t))\,\mathrm{d}t + \beta(C(t))\,\mathrm{d}\ln V_0(t) + \sigma(C(t))\,\mathrm{d}W(t),
\end{equation} 
where $V_0$ is the adjusted wealth process for the benchmark, and $W$ is a Brownian motion: A real-valued continuous process with $W(t) - W(s) \sim \mathcal N(0, t-s)$ independent of $W(u),\, 0 \le u \le s$, for all $0 \le s < t$. This Brownian motion can be viewed as a {\it zoomed out} random walk with very small but very frequent steps. For simplicity, we assume that $\ln V_0(t)$ is also a Brownian motion  with positive drift (which captures the tendency of long-term stock returns to be greater than risk-free returns). Although equity premia have heavy tails and thus are not well-described by the Gaussian distribution, the Brownian motion provides a simple first approximation. Similarly, equation~\eqref{eq:price} becomes
\begin{equation}
\label{eq:prices-CT}
\mathrm{d}\ln S(t) = \alpha(C(t))\,\mathrm{d}t + \beta(C(t))\,\mathrm{d}\ln S_0(t) + \sigma(C(t))\,\mathrm{d}W(t),
\end{equation} 
with $C(t)$ from~\eqref{eq:rel-size}. By the above remark, we can assume that the driving Brownian motion $W$ from~\eqref{eq:premia-CT} and~\eqref{eq:prices-CT} is the same. 

\subsection{Dependence upon the size measure} What is the dependence of market exposure $\beta$, excess return $\alpha$, and standard deviation of unsystematic risk $\sigma$ upon relative size? 

Our statistical analysis does not yield conclusive results. The white noise tests for $Z$ unfortunately fail. Thus we cannot claim that a model~\eqref{eq:main}, ~\eqref{eq:price} passes goodness of fit tests. The most resonable guess for $\alpha, \beta, \sigma$ seems to be:
\begin{align}
\label{eq:data}
\alpha(c), \ \beta(c),\ \sigma(c) = 
\begin{cases}
\alpha_+c,\hspace{1.15cm} 1 + \beta_+c,\ \sigma_+c,\, &c \ge c_+\\
-\alpha_-|c|^{1/2},\ 1 + \beta_-c,\ \sigma_-|c|^{1/2},\, &c \le -c_-.
\end{cases}
\end{align}
Here, $\alpha_{\pm}, \beta_{\pm}, \sigma_{\pm}, c_{\pm}$ are positive constants. Our data analysis does not allow us to suggest these functions in a neighborhood of zero. This is due to the fact that observed $C(t)$ in our data do not come very close to zero.

\subsection{Stochastic Portfolio Theory} The continuous-time version is useful because we can use stochastic calculus and immerse these models in Stochastic Portfolio Theory (SPT). This is a framework for stock market modeling which does not depend on particular models. It suggests overweighing small stocks and continuous rebalancing. This means  investing in small stocks in proportion greater than their market cap would dictate. One example of this is taking the equal-weighted portfolio. The central result of SPT: under mild conditions, diversity (no stock dominates the entire market) and sufficient intrinsic volatility, such portfolios outperform the market portfolio, which invests in each stock in proportion to its market cap; see \cite{BF2008}, \cite{RelArbVS}, \cite{FKK2005}. This theory has solid theoretical basis, and is consistent with observed data. For references, see the book \cite{FernholzBook} and a more recent survey \cite{FernholzKaratzasSurvey}. SPT is based on the same observation as above: small stocks have higher return and risk than large stocks.

Although we mentioned above that SPT is model-independent, there are some SPT models which attempt to capture this observation: {\it competing Brownian particles}, where logarithms of market caps evolve as Brownian motions with drift and diffusion coefficients dependent on their current ranks relative to other particles, \cite{BFK2005}, \cite{5people}, \cite{MyOwn4};  their generalizations with jumps, or with dependence on both name and rank (so-called {\it second-order models}): see articles \cite{5people}, \cite{Levy}, \cite{S2011}; {\it volatility-stabilized models}, where $\ln S(t)$ are modeled by stochastic differential equations (SDE) with volatility inversely proportional to $S(t)$, \cite{Pal}, and their generalizations, \cite{Pickova}. As the number of stocks tends to infinity, the limiting behavior of competing Brownian particles and volatility-stabilized models is studied in \cite{CP2010}, \cite{S2012}, \cite{S2013}.

In this article, we recognize the difference between price and total returns, and model them separately as~\eqref{eq:main} and~\eqref{eq:price} for discrete time or~\eqref{eq:premia-CT} and~\eqref{eq:prices-CT} for continuous time. SPT is much more developed for continuous-time diffusive models based on SDE than for discrete-time, see \cite{PalWong14}. Thus it is reasonable to switch to continuous time in~\eqref{eq:premia-CT} and~\eqref{eq:prices-CT}.

Stochastic Portfolio Theory deals with diversification benefits in the form of excess growth rate, and functionally generated portfolios. The standard assumption in SPT is that there are no dividends; that is, price and total returns are the same. But here this is no longer true. To model separately price returns (which drive the market capitalization processes) and total returns (which drive the wealth processes) requires an extension of the SPT, in particular the concept of functionally generated portfolios. This is left for future research.

\subsection{Data analysis} We take real U.S. market data from Kenneth French's Data Library online: market cap, price and total returns for equal-weighted portfolios made from size deciles (stocks split into top 10\%, next 10\%, etc. according to their size). This library contains data processed from original raw data from the Center for Research in Securities Prices (CRSP) at the University of Chicago, July 1926--June 2020 (84 years), monthly data. 

We study models of $n$ portfolios and the benchmark with market caps $S_0, \ldots, S_n$, relative size measures $C_0, \ldots, C_n$, and wealth processes $V_0, \ldots, V_n$. The corresponding Brownian motions $W_1, \ldots, W_n$ from~\eqref{eq:prices-CT} and~\eqref{eq:premia-CT} (recall that these two equations have the same Brownian motions) are assumed to be i.i.d. for simplicity (although our analysis shows this to be inconsistent with actual data). Consider {\it market weights:} 
$$
\mu_i(t) = \frac{S_i(t)}{S_0(t) + \ldots + S_n(t)},\, i = 0, \ldots, n,
$$
representing the proportion of the $i$th stock in the overall market. Small stocks have smaller market weights. The {\it market weight vector} $\mathbf{\mu} = (\mu_0, \ldots, \mu_n)$ is a Markov process on the $n$-dimensional simplex $\triangle_n$. If this market weight vector converges to a unique stationary distribution as $t \to \infty$ in the total variation norm (see definitions in Section 4), then we call the system {\it stable}. In this article, we state and prove that our market model is stable under certain conditions on $\alpha, \beta, \sigma$. We also investigate whether {\it collisions:} $S_i(t) = S_j(t)$ happen (when small stocks grow and overtake large stocks). Finally, at each time rank weights from top to bottom:
$$
\mu_{(0)}(t) \ge \ldots \ge \mu_{(n)}(t).
$$
The {\it capital distribution curve} is the plot of ranked weights vs their ranks on the double logarithmic scale: 
$$
\left(\ln k, \ln\mu_{(k)}(t)\right),\, k = 0, \ldots, n.
$$
For the actual market data, this plot is linear, except at the endpoint. See \cite[Figure 5.1]{FernholzBook} for the capital distribution curve for the CRSP stock universe for 8 days, chosen to be the last days (December 31) of the eight decades -- December 31, 1929; December 31, 1939; \ldots December 31, 1999. Strikingly, these 8 curves almost completely coincide. In other words, this curve is stable in time. Previously mentioned competing Brownian particles and volatility-stabilized models reproduce this feature, see \cite{CP2010} and \cite{Pal} respectively. In this article, we establish this property for our model using both theoretical analysis and simulations. 

\subsection{Our contributions} We propose a new model within SPT which is consistent with CAPM and captures the size effect. This model is consistent with long-term market data (although not fully consistent) and captures its important features. We study its properties: long-term stability, collisions of particles, market weights, and the capital distribution curve. 

\subsection{Organization of the article} In Section 2, we describe our data in detail and analyze it. This provides a motivation for continuous-time modeling. In Section 3, we introduce a system of SDE as in~\eqref{eq:prices-CT} and~\eqref{eq:premia-CT}, and prove existence and uniqueness of the solution. We also discuss applicability of SPT, since now size and wealth processes are different. In Section 4, we show stability under certain conditions and see how they are compared with data analysis in Section 2.  In Section 5, we replicate the linear capital distribution curve. Finally, Section 6 is devoted to conclusions and suggestions for future research. The Appendix contains some data analysis for existing exchange-traded funds based on size. The code and data can be found on \texttt{GitHub}: \texttt{asarantsev/CAPM-SPT}.

\subsection{Acknowledgements} We thank the Department of Mathematics and Statistics at our University of Nevada, Reno, for welcoming and supportive atmosphere and for fostering research collaboration between faculty and students (undergraduate and graduate). We thank the referees for useful remarks and positive responses. 

\section{Data Analysis}

\subsection{Data description} Our main data source, as mentioned in the Introduction, is the CRSP database at the University of Chicago, taken from Kenneth French's online data library. We take from this library price and total monthly returns for equally-weighted portfolios made of stock in each decile from July 1926 to June 2020. We analyze only the top 8 deciles, corresponding to large-cap, mid-cap, and small-cap stocks. The two bottom deciles are micro-cap stocks which we exclude. Deciles are created based on the market cap at the end of each June. After a year, these deciles are reconstituted. The data library also has average market capitalizations for the end of each month for each decile. Risk-free monthly returns are computed as $R(t) = \ln(1 + r(t)/12)$, where $r(t)$ are monthly data for short-term Treasury bills taken from the Federal Reserve Economic Data (FRED) website: January 1934--June 2020 series TB3MS and July 1926--December 1933 (discontinued) series M1329AUSM193NNBR.

Stock and portfolio returns can be computed in two ways: {\it arithmetic} $A$ and {\it geometric} $G$, which are related as follows: $G = \ln(1 + A)$. Arithmetic returns are quoted regularly for their practicality in computing portfolio returns such as in the data library which we used as our data source. The arithmetic return of a portfolio is equal to the weighted average of arithmetic returns of constituent stocks, but here we convert arithmetic returns to their geometric versions according to the above formula. 

The advantage of geometric over arithmetic returns for our research is apparent when combining compound interest rates. For example, 20\% and 30\% arithmetic returns combined gives 56\% returns, whereas geometric returns combine for 50\% as expected.

Our time unit is equal to a month, of total $T = 12\cdot(2020-1926) = 1128$ months, $t = 0$ corresponding to June 1926, $t = T$ corresponding to June 2020.

\begin{figure}
\centering
\subfloat[$\beta_k(n)-1$ vs $C_k(n)$]{\includegraphics[width = 7cm]{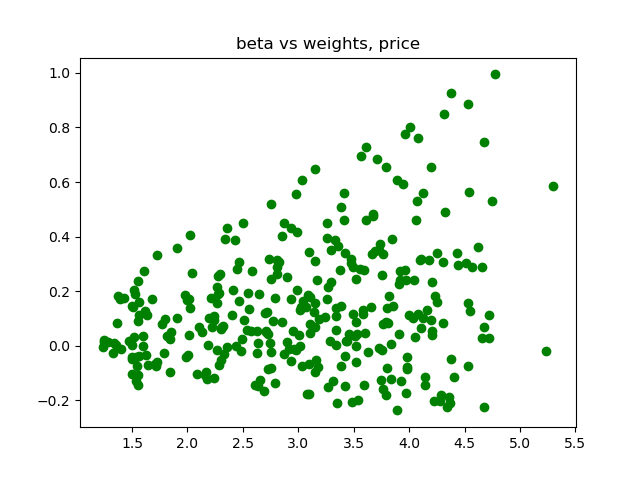}}
\subfloat[$(\beta_k(n)-1)/C_k(n)$ vs $C_k(n)$]{\includegraphics[width = 7cm]{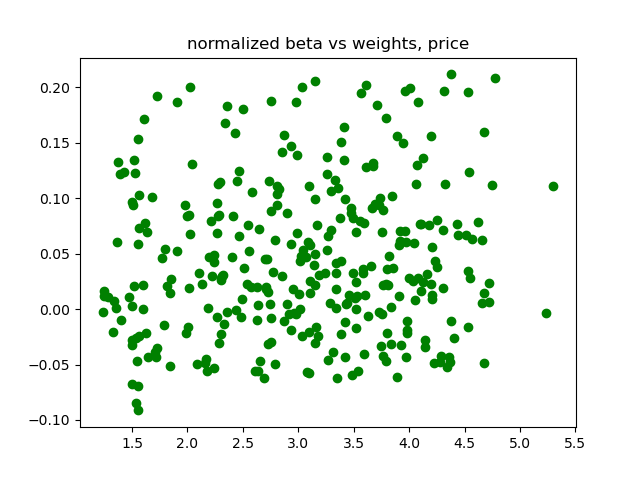}}
\caption{Beta $\beta_k(n)$ vs $C_k(n)$ for top-decile benchmark}
\label{fig:beta-price-top}
\end{figure}

For a decile $k$ (with $k = 1$ for the top decile, $k = 8$ for the bottom decile), its average market cap at end of month $t$ is denoted by $S_k(t)$, the price returns are $Q_k(t)$, and the equity premium (total returns including dividends minus risk-free returns) is $P_k(t)$. 

\subsection{Beta analysis for price returns} Split $T = 1128$ months into $N = 47$ two-year, $K = 24$-month time periods. For each period $n = 1, \ldots, N$ and each decile $k = 1, \ldots, 8$, except the top one, which we use as the benchmark (this top decile roughly corresponds to Standard \& Poor 500 constituent stocks), regress 
\begin{equation}
\label{eq:CAPM-enhanced}
Q_k(t) = \alpha_k(n) + \beta_k(n)Q_1(t) + \delta_k(t),\, k = 2, \ldots, 8,
\end{equation}
where $t$ is in this period, and $\alpha_k(n), \beta_k(n)$ are intercept and slope (excess return and market exposure), found using ordinary least squares; and $\delta_k(t)$ are residuals. Thus we compute beta 
$\beta_k(n)$ for each decile from 2nd to 8th and each of 47 two-year periods. For size measure of the $k$th decile vs top decile, take
$$
C_k(n) = \ln\frac{S_1(Kn)}{S_k(Kn)},
$$
where $S_1(Kn)$ and $S_k(Kn)$ are average market capitalizations for the beginning of $n$th period. Then we plot $\beta_k(n) - 1$ vs $C_k(n)$ and $(\beta_k(n) - 1)/C_k(n)$ in \textsc{Figure}~\ref{fig:beta-price-top}. We plot $\beta_k(n) - 1$ instead of $\beta_k(n)$ since our benchmark for  beta is $1$: The beta for the top decile (which coincides with the benchmark) is $1$. 

\subsection{Model suggestions for price returns} From \textsc{Figure~\ref{fig:beta-price-top} (A)}, we get the suggestion that 
$\beta_k(n) - 1$ is proportional to $C_k(n)$. This is confirmed by \textsc{Figure~\ref{fig:beta-price-top} (B)}. This transformation helps make variance constant. We still cannot claim that these normalized quantities are i.i.d. white noise, since they fail white noise tests. Still, let us extract the trend in $\beta_k(n)$ by replacing it with $1 + \gamma C_k(n)$ for some coefficient. Next, comparing with~\eqref{eq:CAPM-enhanced}, consider the residual $Q_k(t) - (1 + \gamma C_k(n))Q_1(t)$. We make it dependent on $n$ (the overall two-year period), not individual months $t$ in this period, by taking the sum of geometric price returns $Q_k(t)$ over $t$ in this two-year period. This sum is equal to overall price returns $\overline{Q}_k(t)$ in this two-year period. Then we let:
\begin{equation}
\label{eq:CAPM-residual}
\varepsilon_k(n) := \overline{Q}_k(n) - (1 + \gamma C_k(n))\overline{Q}_1(n),\ k = 2, \ldots, 8;\ n = 1, \ldots, N.
\end{equation}

\begin{figure}
\centering
\subfloat[$\varepsilon_k(n)$ vs $C_k(n)$]{\includegraphics[width = 7cm]{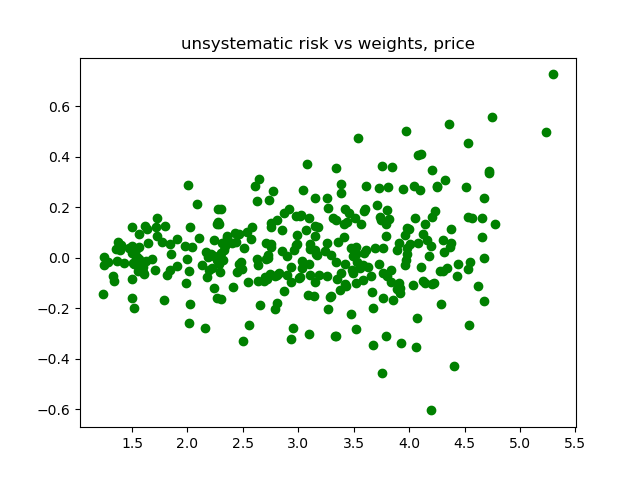}}
\subfloat[$\varepsilon_k(n)/C_k(n)$ vs $C_k(n)$]{\includegraphics[width = 7cm]{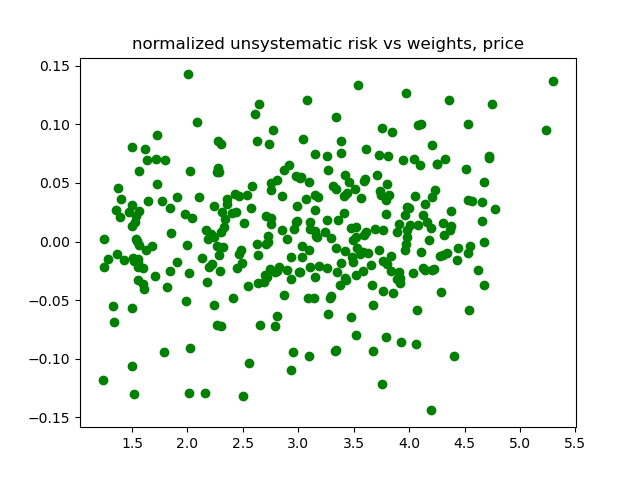}}
\caption{Residual $\varepsilon_k(n)$ vs $C_k(n)$ for top decile  benchmark}
\label{fig:unsys-risk-top}
\end{figure}

We then plot $\varepsilon_k(n)$ vs $C_k(n)$ in \textsc{Figure~\ref{fig:unsys-risk-top} (A)}, together with $\varepsilon_k(n)/C_k(n)$ vs $C_k(n)$ in \textsc{Figure~\ref{fig:unsys-risk-top} (B)}. We see again in \textsc{Figure~\ref{fig:unsys-risk-top} (A)} that $\varepsilon_k(n)$ depends on $C_k(n)$ linearly, and in \textsc{Figure~\ref{fig:unsys-risk-top} (B)} that the variance becomes constant. Again, we cannot claim that $\varepsilon_k(n)/C_k(n)$ are i.i.d. since white noise tests fail. But if we model this as i.i.d. $\mathcal N(\mu, \rho^2)$, we get:
\begin{equation}
\label{eq:residual-final}
\frac{\varepsilon_k(n)}{C_k(n)} = \mu + \rho Z_k(n),
\end{equation}
where $Z_k(n)$ are i.i.d. standard normal random variables. A generalization of this could be:
\begin{equation}
\label{eq:multivariate}
(Z_2(n), \ldots, Z_8(n)) \sim \mathcal N_7(\mathbf{0}, \Sigma)
\end{equation}
i.i.d. multivariate normal with mean zero vector and (not identity) covariance matrix with units on the main diagonal (that is, $\mathbb E[Z_k(n)] = 0$ and $\mathbb E[Z_k^2(n)] = 1$). Combining~\eqref{eq:CAPM-residual} and~\eqref{eq:residual-final}, we get:
\begin{equation}
\label{eq:CAPM-final}
Q_k(n) = (1 + \gamma C_k(n))Q_1(n) + \mu C_k(n) + \rho C_k(n) Z_k(n).
\end{equation}
We can estimate $\gamma = 0.0045, \mu = 0.0069, \rho = 0.052$. This gives us~\eqref{eq:data}, the top row. 

\begin{figure}
\centering
\subfloat[$\beta_k(n) - 1$ vs $C_k(n)$]{\includegraphics[width = 7cm]{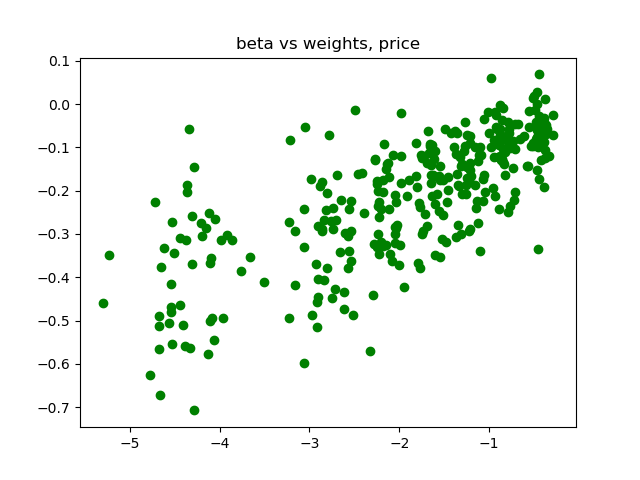}}
\subfloat[$\varepsilon_k(n)/\sqrt{|C_k(n)}$ vs $C_k(n)$]{\includegraphics[width = 7cm]{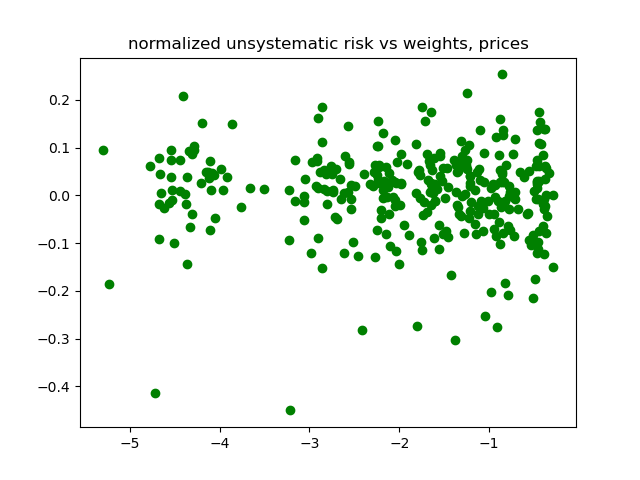}}
\caption{8th decile (bottom) benchmark}
\label{fig:bottom}
\end{figure}

\subsection{Bottom decile as benchmark} If we repeat this analysis in previous subsection with benchmark 8th decile (which in our research is the bottom decile since we ignore the 9th and 10th deciles), we get: For $\beta_k(n)$ in~\eqref{eq:CAPM-enhanced}, we plot $\beta_k(n) - 1$ vs $C_k(n)$ in \textsc{Figure~\ref{fig:bottom} (A)}. In this case, dependence is also linear. Next, take $\gamma$, the mean of these quantities. Create residuals similarly to~\eqref{eq:CAPM-residual}: 
\begin{equation}
\label{eq:CAPM-res-neg}
\varepsilon_k(n) = Q_k(n) - (1 + \gamma\sqrt{|C_k(n)|})Q_8(n),\, k = 1, \ldots, 7,\, n = 1, \ldots, N.
\end{equation} 
The plot of these residuals vs $C_k(n)$, normalized by dividing by $\sqrt{|C_k(n)|}$, is shown in  \textsc{Figure~\ref{fig:bottom} (B)}. Making white noise test for $\varepsilon_k(n)$, we fail to reject this hypothesis. Assume $\varepsilon_k(n) \sim \mathcal N(\mu, \rho^2)$. Combining this with~\eqref{eq:CAPM-res-neg}, we get:
\begin{align}
\label{eq:CAPM-final-neg}
\begin{split}
Q_k(n) &= (1 + \gamma\sqrt{|C_k(n)|})Q_8(n) + \mu\sqrt{|C_k(n)|} + \rho \sqrt{|C_k(n)|} Z_k(n),\\
k &= 1, \ldots, 7,\quad n = 1, \ldots, N;\\
Z_k(n) &= (\varepsilon_k(n) - \mu)/\rho \sim \mathcal N(0, 1)\quad \mbox{i.i.d.}
\end{split}
\end{align}
This verifies~\eqref{eq:data}, the bottom row. From~\eqref{eq:CAPM-final-neg}, we see that linear functions from~\eqref{eq:CAPM-final} are not the only possible and reasonable functions for $\beta$. In Section 4, we state and prove results for general $\alpha, \beta, \rho$. Our coefficients are $\gamma = 0.12, \mu = 0.0055, \rho = 0.090$. 

\subsection{Data analysis for equity premia} Repeating similar analysis for equity premia instead of price returns, we get similar functions for both cases: top and bottom deciles as benchmarks. Moreover, point estimates for price returns and equity premia are close; $\gamma = 0.045, \mu = 0.0017, \rho= 0.052$ for the top decile benchmark, and $\gamma = 0.12, \mu = 0.0024, \rho = 0.088$ for the 8th decile benchmark. Thus functions $\beta$ and $\sigma$ for price returns and equity premia are the same. Unfortunately, for $\alpha$ this is not true. We will have separate functions for price returns and equity premia. As noted in Section 1, white noise terms $Z_k(n)$ for price returns and equity premia are almost perfectly correlated, since the Pearson correlation coefficient is greater than 99\%. However, price returns and equity premia for the benchmark are not perfectly correlated. Thus noise terms for size and wealth processes are different.

\subsection{Conclusions} Data analysis in previous subsections implies that excess return for small stocks is positive, and market exposure for small stocks is greater than $1$. Large stocks have negative excess return and market exposure less than $1$. In other words, small stocks are riskier than larger stocks, but they have higher returns even after adjusting for market exposure. We will accommodate this for continuous time in the next section. 

\section{A Continuous-Time Model}

\subsection{Model description} Take a filtered probability space $(\Omega, \mathcal F, (\mathcal F_t)_{t \ge 0}, \mathbb P)$ with the filtration satisfying the {\it usual conditions} (each $\mathcal F_t$ contains all $\mathbb P$-null sets; that is, $A \in \mathcal F_t$ with $\mathbb P(A) = 0$ and $B \subseteq A$ implies $B \in \mathcal F_t$; and the filtration is right-continuous, that is, $\mathcal F_t = \cap_{s > t}\mathcal F_s$ for all $t \ge 0$). All stochastic processes $X = (X(t),\, t \ge 0)$ below are {\it adapted}, that is, $X(t)$ is $\mathcal F_t$-measurable for every $t \ge 0$. We say that a stochastic process $B = (B(t),\, t \ge 0)$ is a {\it standard Brownian motion} if $B(t) - B(s) \sim \mathcal N(0, t-s)$ is independent of $\mathcal F_s$ for every $0 \le s < t$.

Replace white noise terms $Z_k(n)$ in~\eqref{eq:CAPM-final} or~\eqref{eq:CAPM-final-neg} by standard Brownian motions $W_k(t)$, or, more exactly, its differential $\mathrm{d}W_k(t)$. Replace price returns $Q_k(n)$ by $\mathrm{d}\ln S_k(t)$. Recall that $S_k(t)$ is the average market capitalization of the $k$th portfolio at time $t$. Replace equity premia $P_k(n)$ with $\mathrm{d}\ln V_k(t)$, where $V_k(t)$ is the wealth (including reinvested dividends) of the $k$th decile, starting from $V_k(0) = 1$, divided by wealth similarly computed from risk-free returns. Set three functions: $\alpha, \beta, \sigma$ of $C_k(t)$ (alpha, beta, and standard deviation of unsystematic risk). Set a separate function $\alpha_*$ of $C_k(t)$ for equity premia instead of price returns. We shall not set separate functions for $\beta$ and $\sigma$ since from our data analysis we found that these functions are the same for price returns and equity premia. We have $n$ portfolios indexed by $1, \ldots, n$, and the benchmark indexed with $0$ (that is, total $n+1$ portfolios). We write separate functions for price returns and equity premia (which we denote by asterisks). Consider the following system of SDE:
\begin{equation}
\label{eq:SDE-prices}
\mathrm{d}\ln S_k(t) = \alpha(C_k(t))\,\mathrm{d}t + \beta(C_k(t))\,\mathrm{d}\ln S_0(t) + \sigma(C_k(t))\,\mathrm{d}W_k(t),
\end{equation}
\begin{equation}
\label{eq:SDE-premia}
\mathrm{d}\ln V_k(t) = \alpha_*(C_k(t))\,\mathrm{d}t + \beta(C_k(t))\,\mathrm{d}\ln V_0(t) + \sigma(C_k(t))\,\mathrm{d}W_k(t),
\end{equation}
where $C_k(t)$ is the {{\it relative size measure} defined in the Introduction:
\begin{equation}
\label{eq:size}
C_k(t) = \ln\frac{S_0(t)}{S_k(t)}.
\end{equation}
We also model $(S_0, V_0)$ using two-dimensional geometric Brownian motion: $(\ln S_0, \ln V_0)$ has drift vector $(g_S, g_V)$ and covariance matrix
$$
\mathbf{\Sigma}_{(S, V)} = 
\begin{bmatrix}
\sigma_S^2 & \rho_0\sigma_S\sigma_V\\
\rho_0\sigma_S\sigma_V & \sigma_V^2
\end{bmatrix}
$$
That is, for $t > s$ we have: $\ln S_0(t) - \ln S_0(s) \sim \mathcal N(g_S(t-s), \sigma_S^2(t-s))$ and $\ln V_0(t) - \ln V_0(s) \sim \mathcal N(g_V(t-s), \sigma^2_V(t-s))$; here, $\rho_0$ is the correlation between these two increments. Thus
\begin{align}
\label{eq:2D}
\begin{split}
\ln S_0(t) &= \ln S_0(0) + g_St + \sigma_SW_S(t),\\
\ln V_0(t) &= \ln V_0(0) + g_Vt + \sigma_VW_V(t),
\end{split}
\end{align}
where $W_S$ and $W_V$ are (correlated) standard Brownian motions. We assume $W_1, \ldots, W_n$ are independent of $(B_S, B_V)$. Note that $W_1, \ldots, W_n$ can be dependent of each other standard Brownian motions. We assume that $W = (W_1, \ldots, W_n)$ is an $n$-dimensional Brownian motion with zero drift vector and covariance matrix $\mathbf{\Sigma}_W$ with units on the main diagonal. 

\begin{definition} For functions $\alpha, \beta, \sigma : \mathbb R \to \mathbb R$, real numbers $g_S, g_V$, $2\times 2$ covariance matrix $\mathbf{\Sigma}_{(S, V)}$, and $n\times n$ correlation matrix $\mathbf{\Sigma}_W$, the system of  equations~\eqref{eq:SDE-prices},~\eqref{eq:SDE-premia},~\eqref{eq:size},~\eqref{eq:2D} is called a {\it CAPM-size market model} of $N+1$ portfolios, indexed by $0, \ldots, N$. The process $S_k$ is called {\it market size}, or {\it market cap}, of the $k$th portfolio; and the process $V_k$ is called the {\it wealth process} for this $k$th portfolio.  The portfolio indexed by $k = 0$ is called the {\it benchmark}. The functions $\alpha$ and $\beta$ are called by their Greek names. The function $\sigma$ is called the {\it standard error} (of the diversifiable risk). 
\end{definition}

\subsection{Existence and uniqueness}  We can rewrite~\eqref{eq:SDE-prices} using~\eqref{eq:size}:
\begin{equation}
\label{eq:size-SDE}
\mathrm{d}C_k(t) = -\alpha(C_k(t))\,\mathrm{d}t + (1 - \beta(C_k(t))\,\mathrm{d}\ln S_0(t) + \sigma(C_k(t))\,\mathrm{d}W_k(t).
\end{equation}

\begin{remark} Each $k$th equation~\eqref{eq:SDE-prices} is independent of other equations: Does not contain $S_l$ for other $l  = 1, \ldots, n$. It depends only on $S_k$ and $S_0$.
\end{remark}

\begin{definition} Define the {\it explosion time} as follows:
\begin{equation}
\label{eq:explosion}
\mathcal T := \mathcal T_1\wedge\ldots\wedge\mathcal T_n,
\end{equation}
where $\mathcal T_k$ is the explosion time for~\eqref{eq:SDE-prices}, or, equivalently, for~\eqref{eq:size-SDE}. 
\end{definition}

On $[0, \mathcal T)$, equations~\eqref{eq:SDE-prices} and~\eqref{eq:SDE-premia} have a unique solution. 

\begin{theorem} (a) If $\alpha, \beta, \sigma, \alpha_* : \mathbb R \to \mathbb R$ are measurable and locally bounded, there exists a unique strong solution to the system of SDE~\eqref{eq:SDE-prices},~\eqref{eq:SDE-premia},~\eqref{eq:size}, until the explosion time $\mathcal T$.

\smallskip

(b) If, in addition, we have the following linear bound:
$$
|\alpha(c)| + |\beta(c)| + |\sigma(c)| + |\alpha_*(c)| \le K(1 + |c|),
$$
then the explosion time is infinite: $\mathcal T = \infty$. 
\label{thm:first}
\end{theorem}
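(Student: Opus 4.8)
The plan is to exploit the special structure of the system and reduce it to a family of decoupled, autonomous, one-dimensional equations for the relative size measures $C_k$. By the Remark, the $k$-th equation~\eqref{eq:SDE-prices} involves only $S_k$ and $S_0$; since $\ln S_0$ is the explicit Gaussian process~\eqref{eq:2D}, its increment $\mathrm d\ln S_0 = g_S\,\mathrm dt + \sigma_S\,\mathrm dW_S$ is a known input. Substituting this into~\eqref{eq:size-SDE}, each $C_k$ satisfies the autonomous one-dimensional SDE
\begin{equation}
\mathrm dC_k = \bigl[-\alpha(C_k) + (1-\beta(C_k))g_S\bigr]\,\mathrm dt + (1-\beta(C_k))\sigma_S\,\mathrm dW_S + \sigma(C_k)\,\mathrm dW_k,
\end{equation}
driven by the independent Brownian motions $W_S$ and $W_k$, with drift and diffusion coefficients that are measurable and locally bounded because $\alpha, \beta, \sigma$ are. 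Distinct indices $k$ share only the common noise $W_S$, so the $n$ scalar equations may be treated one at a time.

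For part (a) I would argue by localization. Set $\mathcal T_k^{(m)} = \inf\{t \ge 0 : |C_k(t)| \ge m\}$; on $\{|c| \le m\}$ the coefficients above are bounded, so on $[0, \mathcal T_k^{(m)})$ one constructs a unique solution, and letting $m \to \infty$ with $\mathcal T_k = \lim_m \mathcal T_k^{(m)}$ yields a unique solution up to $\mathcal T_k$, with $\mathcal T = \mathcal T_1 \wedge \cdots \wedge \mathcal T_n$ as in~\eqref{eq:explosion}. The main obstacle is that ``measurable and locally bounded'' sits well below the local-Lipschitz regularity that the classical It\^o theorem demands for pathwise uniqueness, and the diffusion is moreover degenerate (for the data-driven choice~\eqref{eq:data} it vanishes at $c = 0$, where $\beta = 1$ and $\sigma = 0$). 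Handling this rigorously is precisely where the reduction to a scalar equation pays off: in dimension one the Engelbert--Schmidt criterion supplies weak existence and uniqueness in law, and a Yamada--Watanabe or Zvonkin-type argument upgrades these to a pathwise-unique strong solution off the degeneracy set, tools which are unavailable for a genuine multidimensional system.

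Once $C_k$ is built on $[0, \mathcal T)$, the remaining processes follow with no further fixed-point step. I would read off $\ln S_k = \ln S_0 - C_k$ from~\eqref{eq:size} and define $\ln V_k$ by the It\^o integral
\begin{equation}
\ln V_k(t) = \ln V_k(0) + \int_0^t \alpha_*(C_k)\,\mathrm ds + \int_0^t \beta(C_k)\,\mathrm d\ln V_0 + \int_0^t \sigma(C_k)\,\mathrm dW_k,
\end{equation}
whose integrands are adapted and locally bounded (since $C_k$ is continuous and $\alpha_*, \beta, \sigma$ are locally bounded) and whose driver $\mathrm d\ln V_0$ is the explicit process~\eqref{eq:2D}; hence each integral is well defined on $[0, \mathcal T)$ and inherits uniqueness from that of $C_k$.

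For part (b) I would exclude explosion by a Khasminskii-type Lyapunov estimate. Taking $f(c) = 1 + c^2$ and applying It\^o's formula to $f(C_k)$, the generator equals $2C_k[-\alpha(C_k) + (1-\beta(C_k))g_S] + (1-\beta(C_k))^2\sigma_S^2 + \sigma(C_k)^2$. The linear bound $|\alpha| + |\beta| + |\sigma| + |\alpha_*| \le K(1 + |c|)$ makes both the drift and diffusion contributions at most quadratic in $C_k$, so the generator is bounded by $K' f(C_k)$ for a constant $K'$ depending on $K, g_S, \sigma_S$. Dynkin's formula and Gronwall's inequality then give $\mathbb E[f(C_k(t \wedge \mathcal T_k^{(m)}))] \le f(C_k(0))e^{K't}$, whence $\mathbb P(\mathcal T_k^{(m)} \le t) \le f(C_k(0))e^{K't}/(1 + m^2) \to 0$ as $m \to \infty$. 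Thus $\mathcal T_k = \infty$ almost surely for each $k$, and therefore $\mathcal T = \mathcal T_1 \wedge \cdots \wedge \mathcal T_n = \infty$.
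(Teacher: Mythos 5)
Your proposal follows essentially the same route as the paper's proof: decouple the system into the autonomous scalar SDE \eqref{eq:new-SDE} for each $C_k$ (your displayed equation is exactly the paper's $\tilde\gamma$, $\tilde\sigma$ with the two driving noises kept separate rather than merged into one Brownian motion), invoke one-dimensional Zvonkin/Engelbert--Schmidt theory for strong existence and pathwise uniqueness up to the explosion time, reconstruct $S_k = S_0 e^{-C_k}$ and define $\ln V_k$ by direct integration with localization, and rule out explosion under the linear-growth bound. Your part (b) simply writes out, via the Lyapunov function $1+c^2$, Dynkin's formula and Gronwall, the standard non-explosion estimate that the paper asserts without detail, so it is a filling-in of the same argument rather than a different one.
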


\begin{proof} (a) First, let us show existence and uniqueness in~\eqref{eq:size-SDE}. The diffusion part in this equation is given by
$\tilde{\sigma}(C_k(t))\,\mathrm{d}\tilde{W}_k(t)$, where 
$$
\tilde{\sigma}^2(c) := \sigma^2_S(1-\beta(c))^2 + \sigma^2(c),
$$
and $\tilde{W}_k$ is a standard Brownian motion which is a combination of $W_S$ and $W_k$. The drift part is given by $\tilde{\gamma}(C_k(t))\,\mathrm{d}t$, where 
\begin{equation}
\label{eq:drift}
\tilde{\gamma}(c) := -\alpha(c) + g_S(1 - \beta(c)).
\end{equation}
Thus we can rewrite~\eqref{eq:size-SDE} as follows:
\begin{equation}
\label{eq:new-SDE}
\mathrm{d}C_k(t) = \tilde{\gamma}(C_k(t))\,\mathrm{d}t + \tilde{\sigma}(C_k(t))\,\mathrm{d}\tilde{W}_k(t).
\end{equation}
These functions $\tilde{\gamma}$ and $\tilde{\sigma}$ are measurable, and we apply \cite{Zvonkin} to prove strong existence and pathwise uniqueness of the solution $C_k$ to this SDE, at least until the explosion time $\mathcal T_k$. Here we use Remark 1: the system of $n$ equations $C_1, \ldots, C_n$ consists of $n$ independent one-dimensional SDE. Next, we can reconstruct $S_1, \ldots, S_n$ from $C_1, \ldots, C_n$ and $S_0$: $S_k(t) = S_0(t)e^{-C_k(t)}$. Since $S_0$ is well-defined for infinite time horizon (as geometric Brownian motion), the strong solution $S_k$ for~\eqref{eq:SDE-prices} exists and is pathwise unique, too, at least until the explosion time $\mathcal T_k$. Finally, strong existence and pathwise uniqueness for~\eqref{eq:SDE-premia} can be shown as follows. Let 
$$
\mathcal T_{k, m} := \inf\{t \ge 0\mid |C_k(t)| = m\},\, k = 1, \ldots, n;\ m = 1, 2, \ldots
$$
Since $\alpha_*, \beta, \sigma$ are locally bounded, they are bounded on $[-m, m]$. We can rewrite~\eqref{eq:SDE-premia} as
\begin{equation}
\label{eq:V}
\mathrm{d}\ln V_k(t) = \overline{\gamma}(C_k(t))\,\mathrm{d}t + \overline{\sigma}(C_k(t))\,\mathrm{d}\overline{W}_k(t),
\end{equation}
with $\overline{W}_k$ a standard Brownian motion, and 
\begin{equation}
\label{eq:two-f}
\overline{\gamma}(c) := \alpha_*(c) + g_V\beta(c),\quad \overline{\sigma}^2(c) := \sigma^2_V\beta^2(c) + \sigma^2(c).
\end{equation}
These two functions from~\eqref{eq:two-f} are bounded on $[-m, m]$. Rewriting~\eqref{eq:V} as
$$
\ln V_k(t) = \int_0^t \overline{\gamma}(C_k(s))\,\mathrm{d}s + \int_0^t\overline{\sigma}(C_k(s))\,\mathrm{d}\overline{W}_k(s),
$$
we see that (strong) solution until $t < \mathcal T_{k, m}$ is well-defined and pathwise unique, because the integrals (both usual and stochastic) are finite. As $m \to \infty$, we have: $\mathcal T_{k, m} \uparrow \mathcal T_k$ almost surely. Thus~\eqref{eq:SDE-premia} have a pathwise unique strong solution until $\mathcal T_k$. Combining this with~\eqref{eq:explosion}, we complete the proof of (a).

\smallskip

Next, (b) follows from~\eqref{eq:size-SDE} and~\eqref{eq:V} and estimates of linear growth for $\tilde{\gamma}$ and $\tilde{\sigma}$:
$$
|\tilde{\gamma}(c)| + |\tilde{\sigma}(c)| \le K_*(1 + |c|),
$$
following from similar estimates for $\alpha, \alpha_*, \beta, \sigma$. 
\end{proof}

\section{Main Results}

\subsection{Long-term stability results} Recall the definition of market weights.

\begin{definition} For the market model in~\eqref{eq:SDE-prices}, we define {\it market weights} as follows:
$$
\mu_i(t) = \frac{S_i(t)}{S_0(t) + \ldots + S_n(t)},\, i = 0, \ldots, n,\, t \ge 0.
$$
\end{definition}

These market weights sum up to $1$ for every $t \ge 0$, and are in one-to-one correspondence with $C_1(t), \ldots, C_n(t)$: There exists a bijection $\Phi : \triangle_n \to \mathbb R^n$ such that 
$$
\Phi : (m_0, \ldots, m_n) \mapsto \left(\ln\frac{m_0}{m_1}, \ldots, \ln\frac{m_0}{m_n}\right).
$$
The process $\mathbf{C} = (C_1, \ldots, C_n)$, as its each component, is a Markov process. The same is true for the market weight vector $\mu = (\mu_0, \ldots, \mu_n)$. 

\begin{definition} We call a probability measure $\pi_{\mu}$ on $\triangle_n$ a {\it stationary distribution} if $\mu(0) \sim \pi_{\mu}$ implies $\mu(t) \sim \pi_{\mu}$ for all $t \ge 0$. The market is called {\it stable} if this market weight vector has a unique stationary distribution $\pi_{\mu}$, and when we start from another initial distribution $\mu(0)$, then the distribution of $\mu(t)$ converges to $\pi_{\mu}$ as $t \to \infty$ in the {\it total variation} (TV) norm:
$$
\sup\limits_{D \subseteq \triangle_n}|\mathbb P(\mu(t) \in D\mid \mu(0) = x) - \pi_{\mu}(D)| \to 0\quad \mbox{as}\quad t \to \infty.
$$
\end{definition}

We can similarly (and equivalently) define stability for the process $\mathbf{C} = (C_1, \ldots, C_n)$, which is $\mathbb R^n$-valued. These definitions of stability are equivalent, since there is a one-to-one continuous mapping between $C$ and $\mu$. 

\begin{theorem}
Under assumptions of Theorem~\ref{thm:first}, suppose
\begin{equation}
\label{eq:limits}
\varliminf\limits_{c \to \infty}\left[\alpha(c) + g_S(\beta(c) - 1)\right] > 0\quad \mbox{and}\quad \varlimsup\limits_{c \to -\infty}\left[\alpha(c) + g_S(\beta(c) - 1)\right] < 0.
\end{equation}
Then the market system is stable. 
\label{thm:stable}
\end{theorem}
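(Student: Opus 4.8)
The plan is to build everything on the reduction already carried out in the proof of Theorem~\ref{thm:first}. By Remark~1 the coordinates $C_1,\dots,C_n$ are \emph{independent} scalar diffusions, each solving the autonomous one-dimensional SDE~\eqref{eq:new-SDE}, $\mathrm dC_k=\tilde\gamma(C_k)\,\mathrm dt+\tilde\sigma(C_k)\,\mathrm d\tilde W_k$, with $\tilde\gamma$ from~\eqref{eq:drift} and $\tilde\sigma^2(c)=\sigma_S^2(1-\beta(c))^2+\sigma^2(c)$. Since stability of $\mu$ and of $\mathbf C=(C_1,\dots,C_n)$ are equivalent through the continuous bijection $\Phi$, and since the $C_k$ are independent, it suffices to show that each scalar diffusion is ergodic with total-variation convergence to a stationary law $\pi_k$. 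The product $\pi_1\otimes\cdots\otimes\pi_n$ is then the unique stationary law of $\mathbf C$, the subadditivity estimate $\norm{\mathrm{law}(\mathbf C(t))-\bigotimes_k\pi_k}_{TV}\le\sum_k\norm{\mathrm{law}(C_k(t))-\pi_k}_{TV}$ transfers convergence to the vector, and pushforward under $\Phi^{-1}$ transfers it to $\mu$.

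For the scalar analysis I would invoke the classical scale/speed description. The crucial observation is that $\tilde\gamma(c)=-\bigl[\alpha(c)+g_S(\beta(c)-1)\bigr]$, so hypothesis~\eqref{eq:limits} says precisely that $\tilde\gamma(c)\le-\delta<0$ for all large $c$ and $\tilde\gamma(c)\ge\delta>0$ for all large $-c$: the drift is mean-reverting. Writing $f:=2\tilde\gamma/\tilde\sigma^2$, the scale density is $p'(y)=\exp(-\int_{x_0}^y f)$ and the speed density is $m(x)=\tilde\sigma^{-2}(x)\exp(\int_{x_0}^x f)$. Recurrence, $p(\pm\infty)=\pm\infty$, is immediate: for large positive argument $f\le0$ forces $p'\ge1$, hence $p(x)\ge x-x_0\to\infty$, and symmetrically at $-\infty$.

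Positive recurrence reduces to $\int_{\mathbb R}m<\infty$, and here the clean device is the identity $m(x)=\tfrac{1}{2\tilde\gamma(x)}\tfrac{\mathrm d}{\mathrm dx}\exp(\int_{x_0}^x f)$. On $[x_0,\infty)$ one has $\tilde\gamma\le-\delta$, so $\lvert 1/2\tilde\gamma\rvert\le 1/2\delta$ while $\tfrac{\mathrm d}{\mathrm dx}\exp(\int_{x_0}^x f)\le0$, giving the telescoping bound $\int_{x_0}^\infty m\le\frac{1}{2\delta}$, and symmetrically at $-\infty$. Thus the one-sided asymptotic sign conditions \emph{alone} force a finite speed measure, with no growth control on $\tilde\sigma$ needed beyond the non-degeneracy and local boundedness already in force. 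Normalizing $m$ gives the unique stationary density. It then remains to upgrade ergodicity to total-variation convergence: because $\tilde\sigma^2>0$ is locally bounded away from zero, the scalar generator is elliptic, the process is a regular Harris-recurrent diffusion whose transition kernel admits an everywhere-positive density for every $t>0$, and positive Harris recurrence together with the finite invariant measure yields $\norm{P_t(x,\cdot)-\pi_k}_{TV}\to0$ by the standard ergodic theorem for such processes (Khasminskii; Meyn--Tweedie).

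The step I expect to be most delicate is exactly this last upgrade from positive recurrence to \emph{total-variation} rather than merely weak convergence: one must secure the irreducibility and aperiodicity underlying Harris ergodicity, and both rest on the non-degeneracy of $\tilde\sigma$, which is what makes the transition density strictly positive. If $\tilde\sigma$ were permitted to vanish (possible when $\sigma(c)=0$ and $\beta(c)=1$ coincide), the generator would lose ellipticity and this machinery would collapse, so I would state the standing assumption $\tilde\sigma^2>0$ explicitly. Everything else — the independence reduction, the scale-function recurrence check, the one-line speed-measure estimate, and the transfer through $\Phi$ — is routine once this point is secured.
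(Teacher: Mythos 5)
There is a genuine gap at the very first step: the coordinates $C_1,\dots,C_n$ are \emph{not} independent, and your entire reduction rests on that claim. Remark~1 says only that the $n$ equations are \emph{decoupled} (the $k$th equation involves no $S_l$ with $l\neq k$); it does not say the solution processes are probabilistically independent. Writing out~\eqref{eq:size-SDE} with $\mathrm{d}\ln S_0(t)=g_S\,\mathrm{d}t+\sigma_S\,\mathrm{d}W_S(t)$ gives
\begin{equation*}
\mathrm{d}C_k(t)=\bigl[-\alpha(C_k(t))+g_S(1-\beta(C_k(t)))\bigr]\,\mathrm{d}t+\sigma_S\bigl(1-\beta(C_k(t))\bigr)\,\mathrm{d}W_S(t)+\sigma(C_k(t))\,\mathrm{d}W_k(t),
\end{equation*}
so every coordinate is driven by the \emph{common} benchmark noise $W_S$; in addition, the idiosyncratic noises $W_1,\dots,W_n$ are themselves allowed to be correlated, since $\mathbf{\Sigma}_W$ need not be the identity. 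The single Brownian motion $\tilde W_k$ in~\eqref{eq:new-SDE} is a device for the \emph{marginal} law of one coordinate; the $\tilde W_k$ are dependent across $k$. The paper states this explicitly after Theorem~\ref{thm:hit} (``the components of the stationary distribution for the overall vector $C$ are not independent since the SDE for individual relative size measures are dependent''), and the explicit solution in Section~5 exhibits the common factor $Z(t)$, driven by $W_S$, shared by all coordinates. Consequently the stationary law of $\mathbf C$ is not $\pi_1\otimes\cdots\otimes\pi_n$, and your total-variation subadditivity inequality is simply false for dependent coordinates, so ergodicity of the marginals does not transfer to the vector.

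What survives is exactly the scalar statement: your scale/speed computation (including the telescoping bound $\int_{x_0}^\infty m\le 1/(2\delta)$, which is correct and rather elegant, and needs no growth control on $\tilde\sigma$) shows each $C_k$ separately is positively recurrent under~\eqref{eq:limits} and nondegeneracy. But the theorem is about the vector, so you need an argument that tolerates dependence. That is what the paper's route provides: the Lyapunov function $V$ gives $\varlimsup_{|c|\to\infty}\mathcal LV(c)<0$, hence tightness of each coordinate via Meyn--Tweedie, hence tightness of the joint law (tightness, unlike TV-convergence to a product measure, does pass from marginals to the vector); this is then combined with irreducibility of the $n$-dimensional process to obtain existence, uniqueness, and TV-convergence of the joint stationary distribution. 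Your closing caveat about nondegeneracy is well taken --- the paper's own irreducibility claim also implicitly requires $\tilde\sigma^2>0$, and indeed fails in Example~\ref{exm:linear} with $\sigma(c)=\rho c$, where the law of $C_k(t)$ stays at TV-distance $1$ from the limiting point mass --- but that caveat does not repair the independence step; you would need to redo the ergodicity argument for the coupled $n$-dimensional diffusion.
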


\begin{proof} We apply the method of Lyapunov functions. As a Lyapunov function, take the following infinitely differentiable function $V : \mathbb R \to [0, \infty)$:
$$
V(c) := \begin{cases} |c|,\, c \ge 2;\\ 0,\, c \le 1.\end{cases}
$$
This function can be constructed by smoothing kernel and convolution. The generator for each $C_k$ in~\eqref{eq:SDE-prices} is given by
$$
\mathcal Lf(c) := \tilde{\gamma}(c)f'(c) + \frac12\tilde{\sigma}^2(c)f''(c).
$$
Recall~\eqref{eq:drift}: $\varlimsup\limits_{c \to \infty}\tilde{\gamma}(c) < 0$ and $\varliminf\limits_{c \to -\infty}\tilde{\gamma}(c) > 0$. Combining this with~\eqref{eq:limits}, we get: 
$$
\varlimsup\limits_{|c| \to \infty}\mathcal LV(c) < 0.
$$
From the articles \cite{MT1993a} and \cite{MT1993b}, we get tightness of each $C_k$. That is, $\sup_{t \ge 0}\mathbb P(|C_k(t)| \ge c) \to 0$ as $c \to \infty$. The same is true for the vector $\mathbf{C} = (C_1, \ldots, C_n)$. Combining this observation with the following property: for each $i$, 
$\mathbb P(a < C_i(t) < b\mid C_i(0) = x) > 0,\quad x, a, b \in \mathbb R$, we complete the proof. \end{proof}

\begin{remark} We also have convergence 
\begin{align*}
\sup\limits_{g \in  \mathcal G}&|\mathbb E[g(C_k(t))] - (\pi_C, g)| \to 0,\, t \to \infty,\\
\mathcal G &:= \{g : \mathbb R \to \mathbb R\mid \sup\limits_{z \in \mathbb R}\frac{|g(z)|}{1 + |z|} < \infty\}.
\end{align*}
In particular, we have $\mathbb E[C_k(t)] \to m_C$, where $m_C$ is the mean of the distribution $\pi_C$.
\end{remark}

\begin{example} If $\beta(c) = 1 + \gamma c$, $\alpha(c) = \mu c$, then $\alpha(c) + g_S(\beta(c) - 1) = (\mu + g_S\gamma)c$, and~\eqref{eq:limits} is equivalent to 
\begin{equation}
\label{eq:cond}
\Gamma := \mu + g_S\gamma > 0.
\end{equation}
\label{exm:linear}
\end{example}

\begin{example} From the data analysis in Section 2, let
$$
\alpha(c) := 
\begin{cases}
\alpha_+|c|^{\gamma_+},\ c \ge 0;\\
-\alpha_-|c|^{\gamma_-},\, c \le 0;
\end{cases}
\ 
\beta(c) := 1 + 
\begin{cases}
\beta_+c,\, c \ge 0;\\
\beta_-c,\, c \le 0;
\end{cases}
\ 
\sigma(c) := 
\begin{cases}
\sigma_+|c|^{\gamma_+},\, c \ge 0;\\
\sigma_-|c|^{\gamma_-},\, c \le 0.
\end{cases}
$$
Here, $\alpha_{\pm}, \beta_{\pm}, \gamma_{\pm}, \sigma_{\pm} > 0$. Let us find conditions for~\eqref{eq:limits}: For the first condition, if $\gamma_+ < 1$, we have $g_S\beta_+ > 0$; if $\gamma_+ = 1$, we have $\alpha_+ + g_S\beta_+ > 0$; if $\gamma_+ > 1$, we have $\alpha_+ > 0$. Similarly for the second condition in~\eqref{eq:limits}. The actual estimates from Section 2 satisfy these conditions. 
\label{exm:power}
\end{example}

\subsection{Hitting times} We wish to allow $S_i$ and $S_0$ to exchange ranks. That is, we want to allow  for $S_i(0) < S_0(0)$ but $S_i(t) > S_0(t)$ for some $t > 0$, or vice versa. This is consistent with real world market behavior, when portfolios exchange ranks based on size. In terms of relative size measures $C_i$, we wish that $C_i$ can move from positive half-line to negative half-line. In particular, it must hit zero with positive probability. 

This is not true for Example~\ref{exm:linear} if $\sigma(c) = \rho c$ for $\rho > 0$. Indeed, then $C_i$ is a geometric Brownian motion with drift $-\Gamma$ and thus converges to $0$ almost surely as $t \to \infty$. Thus the limiting stationary distribution is the delta measure at the origin,  $\delta_{(0, \ldots, 0)}$. The corresponding limiting distribution for $\mu$, the market weight vector is $\delta_{(1/(n+1), \ldots, 1/(n+1))}$. If $\sigma(c)$ is bounded away from zero, this changes the behavior of $\mathbf{C}$.

\begin{theorem} Under conditions of Theorem~\ref{thm:stable}, assume 
\begin{equation}
\label{eq:sigma}
\sigma(c) \ge \sigma_* > 0\quad \mbox{for all}\quad c \in \mathbb R.
\end{equation}

(a) For every $c \in \mathbb R$ with positive probability there exists a $t > 0$ such that $C_i(t) = c$. 

\smallskip

(b) The stationary distribution $\pi_C$ for $\mathbf{C} = (C_1, \ldots, C_n)$ has support on $\mathbb R^n$. The stationary distribution $\pi_{\mu}$ for 
$\mu$ has support on $\Delta_n$. 
\label{thm:hit}
\end{theorem}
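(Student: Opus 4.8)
The plan is to reduce everything to the one-dimensional reduced equations~\eqref{eq:new-SDE} and to the joint diffusion they generate, exploiting the uniform non-degeneracy forced by~\eqref{eq:sigma}. Since $\sigma(c) \ge \sigma_* > 0$, the diffusion coefficient of each $C_i$ in~\eqref{eq:new-SDE} obeys $\tilde\sigma^2(c) = \sigma_S^2(1-\beta(c))^2 + \sigma^2(c) \ge \sigma_*^2 > 0$, so each $C_i$ is a genuinely non-degenerate one-dimensional diffusion with locally bounded coefficients (from Theorem~\ref{thm:first}) which, under the conditions of Theorem~\ref{thm:stable}, is non-explosive.

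For part (a) I would argue through the scale function of the one-dimensional diffusion $C_i$. Because $\tilde\gamma/\tilde\sigma^2$ is locally bounded, hence locally integrable, the scale density $s'(y) = \exp(-\int_{x_0}^y 2\tilde\gamma(w)/\tilde\sigma^2(w)\,\mathrm dw)$ is well defined, strictly positive and locally bounded, so the scale function $s$ is continuous and strictly increasing on all of $\mathbb R$. Fix the target $c$ and the (arbitrary) starting value $x = C_i(0)$, and assume $x < c$, the other case being symmetric. For any $a < x$ the classical hitting identity for one-dimensional diffusions gives $\mathbb P_x(C_i \text{ reaches } c \text{ before } a) = (s(x)-s(a))/(s(c)-s(a)) > 0$, and on this event there is a $t > 0$ with $C_i(t) = c$. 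This yields (a). If one prefers to avoid scale-function computations with merely measurable coefficients, the same conclusion follows from a local Girsanov change of measure on $[0,\tau]$, with $\tau$ the exit time of a large interval, under which the law of $C_i$ is equivalent to that of a time-changed Brownian motion, which reaches $c$ with positive probability.

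For part (b) I would first establish full support of $\pi_C$ and then transport it to $\pi_\mu$. The joint process $\mathbf C = (C_1,\ldots,C_n)$ is an $\mathbb R^n$-valued diffusion whose diffusion matrix, read off from~\eqref{eq:new-SDE} together with the independence of $W_S$ from $W_1,\ldots,W_n$, is $a(c) = \sigma_S^2\,u(c)u(c)^\top + D(c)\,\mathbf{\Sigma}_W\,D(c)$, where $u(c) = (1-\beta(c_1),\ldots,1-\beta(c_n))^\top$ and $D(c) = \mathrm{diag}(\sigma(c_1),\ldots,\sigma(c_n))$. When $\mathbf{\Sigma}_W$ is non-degenerate (in particular in the i.i.d.\ case $\mathbf{\Sigma}_W = I$ used throughout the paper), the second term satisfies $D(c)\,\mathbf{\Sigma}_W\,D(c) \succeq \sigma_*^2\,\lambda_{\min}(\mathbf{\Sigma}_W)\,I \succ 0$, so $a(c)$ is uniformly elliptic from below while retaining locally bounded measurable entries. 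Standard interior Harnack/positivity estimates then give a strictly positive transition density, whence $p_t(x,U) := \mathbb P_x(\mathbf C(t) \in U) > 0$ for every $t > 0$, every $x$, and every nonempty open $U \subseteq \mathbb R^n$. Using invariance of $\pi_C$, $\pi_C(U) = \int_{\mathbb R^n} p_t(x,U)\,\pi_C(\mathrm dx) > 0$, since the integrand is everywhere positive and $\pi_C$ is a probability measure; hence $\operatorname{supp}\pi_C = \mathbb R^n$. Equivalently, this is open-set irreducibility, which together with the positive Harris recurrence supplied by Theorem~\ref{thm:stable} gives full support of the invariant law via \cite{MT1993a, MT1993b}. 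Finally, since $\mu = \Phi^{-1}(\mathbf C)$ for the homeomorphism $\Phi$ between the simplex and $\mathbb R^n$ introduced above, $\pi_\mu$ is the image of $\pi_C$ under $\Phi^{-1}$, and a homeomorphism carries support onto support, so $\operatorname{supp}\pi_\mu = \Phi^{-1}(\mathbb R^n) = \Delta_n$.

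The hard part will be exactly the step that upgrades the coordinatewise reachability of part (a) to full support of the joint law $\pi_C$: the coordinates $C_i$ are coupled through the shared benchmark noise $W_S$ and the off-diagonal correlations in $\mathbf{\Sigma}_W$, so joint reachability does not follow formally from the one-dimensional statements. The uniform lower ellipticity of $a(c)$ is what rescues the argument, and it is available precisely because~\eqref{eq:sigma} keeps $D(c)$ bounded below. If $\mathbf{\Sigma}_W$ were permitted to be rank-deficient, this lower bound would fail and one would instead have to verify a H\"ormander-type controllability condition and invoke the Stroock--Varadhan support theorem; that is the only genuinely delicate point, and it does not arise in the i.i.d.\ regime on which the paper focuses.
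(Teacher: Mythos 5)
Your proposal is correct, and its skeleton matches the paper's: reduce to the one-dimensional equation~\eqref{eq:new-SDE}, argue part (a) through the scale function, and get part (b) from non-degeneracy plus the homeomorphism $\Phi$ between $\mathbb R^n$ and the simplex. The differences are still worth recording. For (a), the paper derives exponential bounds on the scale density $s'$ from the drift conditions of Theorem~\ref{thm:stable} and then invokes Feller's test from \cite[Chapter 5, Section 5]{KSBook}; that route proves recurrence (every level is hit almost surely), which is stronger than what the theorem states. Your two-sided exit identity $\mathbb P_x(\tau_c < \tau_a) = (s(x)-s(a))/(s(c)-s(a)) > 0$ needs only that $s$ be finite, continuous and strictly increasing on compacts --- guaranteed by local boundedness of $\tilde\gamma$ and by $\tilde\sigma^2 \ge \sigma_*^2$ --- so it is more elementary, uses the global drift conditions not at all, and delivers exactly the stated positive-probability claim (while proving less than the paper actually obtains). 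For (b), the paper disposes of full support in one sentence (``ellipticity of the elliptic PDE governing the stationary density''), whereas you supply the actual content: the joint diffusion matrix $a(c) = \sigma_S^2\,u(c)u(c)^\top + D(c)\,\mathbf{\Sigma}_W\,D(c)$ is bounded below by $\sigma_*^2\lambda_{\min}(\mathbf{\Sigma}_W)I$, which is positive definite only when $\mathbf{\Sigma}_W$ is non-degenerate --- a hypothesis the paper never states but implicitly needs, since condition~\eqref{eq:sigma} alone does not rule out a rank-deficient correlation matrix. Identifying that hidden assumption, and noting the Stroock--Varadhan/controllability fallback in the degenerate case, is a genuine improvement over the paper's treatment. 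One cosmetic point: since $\Phi^{-1}(\mathbb R^n)$ is the open simplex, the support of $\pi_\mu$ is its closure $\triangle_n$; the pushforward-of-support step should say ``closure of the image,'' though the paper is equally loose here.
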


\begin{proof} Return again to the equation~\eqref{eq:new-SDE}, which is a simplified equation~\eqref{eq:size-SDE}. 

\smallskip

(a) Compute the scale function $s$ for the diffusion $C_k$: Its derivative is 
$$
s'(c) = \exp\left[-2\int_0^c\frac{\tilde{\gamma}(u)}{\tilde{\sigma}^2(u)}\,\mathrm{d}u\right].
$$
There exist $\gamma_* > 0$ and $c_* > 0$ such that 
$$
\tilde{\gamma}(u) \le -\gamma_*,\, c \ge c_*;\quad \tilde{\gamma}(u) \ge \gamma_*,\, c \le -c_*.
$$
Moreover, $\tilde{\sigma}(u) \ge \sigma(u) \ge \sigma_*$ for all $u \in \mathbb R$. Thus for $c \ge c_*$, 
\begin{equation}
\label{eq:scale-plus}
s'(c) \ge s'(c_*)\exp\left[2\int_{c_*}^c\frac{\gamma_*}{\sigma_*^2}\,\mathrm{d}c\right] = s'(c_*)\exp\left[2(c-c_*)\frac{\gamma_*}{\sigma_*^2}\right].
\end{equation}
A similar estimate is true for $c \le -c_*$:
\begin{equation}
\label{eq:scale-minus}
s'(c) \le s'(-c_*)\exp\left[2(|c|-c_*)\frac{\gamma_*}{\sigma_*^2}\right].
\end{equation}
The speed measure has bounded Lebesgue density $1/(s'(c)\sigma^2(c))$, as shown in~\eqref{eq:scale-plus},~\eqref{eq:scale-minus},~\eqref{eq:sigma}. Apply Feller's test from \cite[Chapter 5, Section 5]{KSBook} to complete the proof. 

\smallskip

(b) The statement for $\pi_C$ follows from ellipticity of the elliptic partial differential equation governing the Lebesgue density of this stationary distribution. The statement for $\pi_{\mu}$ follows from one-to-one mapping between $\mathbf{C}$ and $\mu$. 
\end{proof}

It seems to us that~\eqref{eq:sigma} is a reasonable assumption, since we would want to allow for exchange of ranks of portfolios. This does not contradict our statistical analysis, since we can observe only $C_i(t) > c_+$ or $C_i(t) < -c_-$. Our suggested functions $\sigma(c)$ do give $\sigma(0) = 0$ if we extend them to $[-c_-, c_+]$ as is. But we could not observe $C_i(t)$ in a neighborhood of zero, thus we can extend it as a piecewise function. 

Under assumptions of Theorem~\ref{thm:hit}, the stationary distribution for each $C_i$ has (after normalization) density as above: $1/(s'\sigma^2)$, which is supported on the whole real line but is bounded. This is different from the case $\sigma(c) = \sigma_0c$ discussed above in Example~\ref{exm:linear}, when the stationary distribution is concentrated at one point, but the components of the stationary distribution for the overall vector $C$ are not independent since the SDE for individual relative size measures are dependent. 

\subsection{Lack of propagation of chaos} For interacting particle systems, sometimes dependence (as a process itself, or stationary distribution) vanishes as the number of particles tends to infinity. This phenomenon is called {\it propagation of chaos}, since the system becomes less interdependent and more chaotic. Such results were shown for competing Brownian particles and volatility-stabilized models (see citations in the Introduction). But it is unreasonable to expect this for the current system since all particles are dependent. The limiting density, if it exists, will likely be a solution to a stochastic partial differential equation. To derive this large system limit is left for future research.

\section{Capital Distribution Curve}

\subsection{Modified plots} Let us study the capital distribution curve $(\ln k, \ln \mu_{(k)}(t))$ in this model. We solve the system of stochastic differential equations explicitly. We use the values of $C_k(t)$ to plot the capital distribution curve: 
$$
C_k(t) = \ln\frac{S_0(t)}{S_k(t)} = \ln\frac{S_0(t)}{S(t)} - \ln \mu_k(t).
$$
Thus the ranking of $C_k(t)$ reverts the ranking of ranked market weights:
$$
0 \le C_{(1)}(t) \le \ldots \le C_{(n)}(t).
$$
Thus we can plot the modified curve $(\ln k, C_{(k)}(t))$. If this curve is linear, the same can be said for the original capital distribution curve. Now we shall study the system of stochastic differential equations~\eqref{eq:size-SDE} and plot $(\ln k, C_{(k)}(t))$ for fixed $t$. 

\subsection{Degenerate case} Even if the system is stable, under conditions of Theorem 2, capital distribution curve can be degenerate, equal to one point: $(1/(n+1), \ldots, 1/(n+1))$. This is the case when $C_k(t) \to 0$ a.s. as $t \to \infty$ for each $k = 1, \ldots, n$. Indeed, in this case $\mu_k(t) \to 1/(n+1)$ a.s. as $t \to \infty$.  In particular, this is true in Example 1 with $\sigma(c) = \rho c$ for $\rho > 0$. Below, we consider the case when the capital distribution curve is not trivial. 

\subsection{Linear case} This is the case when
\begin{equation}
\label{eq:linear-case}
\alpha(c) = \mu c,\quad \beta(c) = 1 + \gamma c,\quad \sigma(c) = \rho.
\end{equation}
Then the system~\eqref{eq:size-SDE} is linear:
$$
\mathrm{d}C_k(t) = -\mu C_k(t)\,\mathrm{d}t - \gamma C_k(t)(g_S\,\mathrm{d}t + \sigma_S\,\mathrm{d}W_S(t)) + \rho\,\mathrm{d}W_k(t).
$$
As shown in \cite[Chapter 5, Section 6.C]{KSBook}, we can solve this system explicitly:
\begin{align*}
C_k(t) &= Z(t)\left[C_k(0) - \int_0^tZ^{-1}(u)(\rho\,\mathrm{d}W_k(u) - \gamma\sigma_S\rho\,\mathrm{d}u)\right];\\
Z(t) &= \exp\left[-(\mu + \gamma g_S)t - \gamma\sigma_S\,W_S(t) - \frac12\gamma^2\sigma_S^2t\right].
\end{align*}
Below, we show the simulation results for $n = 100$, with initial conditions $C_k(0) = 0$, $k = 1, \ldots, n$. We take estimated values $\mu = 0.0069$, $\gamma = 0.0045$. For $\rho$, we take the value $0.1$, which is consistent with estimates.  Finally, estimates for mean $g_S$ and standard deviation $\sigma_S$ of monthly price returns for the benchmark (top decile) are given by $g_S = 0.0044$ and $\sigma_S = 0.0541$. We simulate until $t = 100$, assuming all $W_1, \ldots, W_n, W_S$ are independent Brownian motions. 

\begin{figure}
\centering
\includegraphics[width = 10cm]{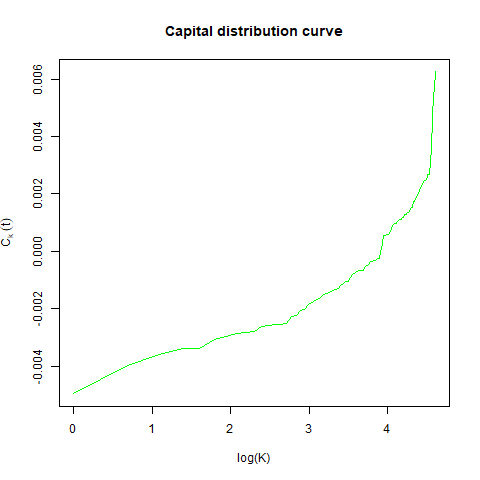}
\caption{Capital Distribution Curve $(\ln k, C_{(k)})$}
\end{figure}

\section{Conclusions} We developed a model in this article which can be viewed as an enhancement of the Capital Asset Pricing Model, which stresses dependence of stock portfolios upon the overall market. The portfolios are based on size (market cap), and the quantities $\alpha$, $\beta$, and standard error $\sigma$ (of diversifiable idiosyncratic risk) depends on size (more precisely, size of portfolio relative to the size of benchmark). Thus we wrote a system of stochastic differential equations. 

We write separately systems of equations for equity premia (total returns, including dividends, minus risk-free returns), and for market size (that is, price returns). They are very similar and the idiosyncratic risk can be taken the same. The equity premium and price returns of the benchmark are driven by different random processes (although correlated). 

Using CRSP 1926--2020 monthly data of size deciles, we find reasonable guesses for $\alpha$, $\beta$, $\sigma$ as functions of relative size. Our statistical analysis is not fully rigorous, because it fails white noise tests. However, we do find some reasonable results. 

On the theoretical side, we prove long-term stability results: under some conditions on $\alpha, \beta, \sigma$, the vector of relative size measures converges as $t \to \infty$ to a stationary distribution. 

Finally, an important feature of real-world markets: stability and linearity of the capital distribution curve, is reproduced in our model by numerical simulation. 

Future research can include making more sophisticated time series models which take into account autocorrelations, or non-Gaussian fluctuations of the market. It seems important to develop the SPT for the case of dividends, when price and total returns (and by extension market capitalizations and wealth processes) are different. Finally, we would like to derive a large system limit.

\section*{Appendix. Statistical Analysis of Size-Based Index Funds} 

These size deciles of the CRSP universe are not directly investable. But there exist size-based funds available for individual investors. Among many of them, let us take \texttt{JKJ}, \texttt{JKG}, \texttt{JKD}: iShares Morningstar Small-Cap, Mid-Cap, and Large-Cap exchange-traded funds. These are based on largest 70\%, next 20\%, and next 7\% of the total universe of stocks. In other words, Large-Cap corresponds to Deciles 1--7 weighted by their market capitalizations, Mid-Cap corresponds to Deciles 8--9 weighted by their market capitalizations, and Small-Cap corresponds to the top 7\% of the bottom Decile 10.

Monthly total arithmetic returns for these funds are taken from BlackRock web site, July 2004 -- August 2020. For risk-free returns, we take 1-month Treasury Constant Maturity Rate from Federal Reserve Economic Data web site, observed at the last day of each month June 2004 -- July 2020. We compute geometric versions of these returns. From each such rate $r$, we obtain geometric total monthly returns for the next month $\ln(1 + r/1200)$. Then we compute equity premia $P_S, P_M, P_L$ for these funds. Regress the first two upon the third:
\begin{align}
\label{eq:joint}
\begin{cases}
P_S(t) = \alpha_S + \beta_SP_L(t) + \varepsilon_S(t);\\
P_M(t) = \alpha_M + \beta_MP_L(t) + \varepsilon_M(t);
\end{cases}
\quad 
\begin{bmatrix}
\varepsilon_S(t)\\
\varepsilon_M(t)
\end{bmatrix}
 \sim \mathcal N_2\left(\begin{bmatrix}0 \\  0\end{bmatrix}, \Sigma\right).
\end{align}
The quantile-quantile plots, Shapiro-Wilk and Jarque-Bera normality tests, and autocorrelation function plots allow us to assume that each series of residuals can be modeled by i.i.d. normal distribution. Thus we can apply standard Student tests for regression coefficients. The 95\% confidence intervals for each of $\alpha_S$ and $\alpha_M$ contain zero. Thus we can assume that $\alpha_S = \alpha_M = 0$, but the confidence intervals for $\beta_S$ and $\beta_M$ do not contain $1$. Point estimates of these coefficients are: $\beta_S = 1.27$, $\beta_M = 1.15$. Estimates for standard errors for residuals, and cross-correlation between residuals are: $\sigma_S = 0.026$, $\sigma_M = 0.019$, $\rho = 0.83$. The $R^2$ values for each regression are 87\% and 80\%. Thus we see that the CAPM works for actual traded size-based funds.

\end{document}